\providecommand{\U}[1]{\protect\rule{.1in}{.1in}}
\newtheorem{theorem}{Theorem}
\newtheorem{acknowledgement}[theorem]{Acknowledgement}
\newtheorem{notation}[theorem]{Notation}
\newenvironment{proof}[1][Proof]{\noindent\textbf{#1.} }{\ \rule{0.5em}{0.5em}}
\begin{document}

\title{Short-Time Propagators and the Born--Jordan Quantization Rule}
\author{Maurice A. de Gosson\thanks{maurice.de.gosson@univie.ac.at}\\University of Vienna\\Faculty of Mathematics (NuHAG)}
\maketitle

\begin{abstract}
We have shown in previous work that the equivalence of the Heisenberg and
Schr\"{o}dinger pictures of quantum mechanics requires the use of the Born and
Jordan quantization rules. In the present work we give further evidence that
the Born--Jordan rule is the correct quantization scheme for quantum
mechanics. For this purpose we use correct short-time approximations to the
action functional, which lead to the desired quantization of the classical Hamiltonian.

\end{abstract}

\section{Motivation and Background}

\subsection{Weyl vs Born and Jordan}

There have been several attempts in the literature to find the
\textquotedblleft right\textquotedblright\ quantization rule for observables
using either algebraic or analytical techniques
\cite{Dewey,Hall,khan,kuma,mado1,mado2,Shewell}. In a recent paper \cite{FPBJ}
we have analyzed the Heisenberg and Schr\"{o}dinger pictures of quantum
mechanics, and shown that if one postulates that both theories are equivalent,
then one must use the Born--Jordan quantization rule
\begin{equation}
\text{(BJ) \ \ \ }x^{m}p^{\ell}\longrightarrow\frac{1}{m+1}\sum_{k=0}%
^{m}\widehat{x}^{k}\widehat{p}^{\ell}\widehat{x}^{m-k}, \label{bj1}%
\end{equation}
and \emph{not} the Weyl rule\footnote{To be accurate, it was McCoy
\cite{mccoy} who showed that Weyl's quantization scheme leads to formula
(\ref{w2}).}%
\begin{equation}
\text{(Weyl) \ \ \ \ }x^{m}p^{\ell}\longrightarrow\frac{1}{2^{m}}\sum
_{k=0}^{m}\binom{m}{k}\widehat{x}^{k}\widehat{p}^{\ell}\widehat{x}^{m-k}
\label{w2}%
\end{equation}
for monomial observables. The Born--Jordan and Weyl rules yield the same
result only if $m<2$ or $\ell<2$; for instance in both cases the quantization
of the product $xp$ is $\frac{1}{2}(\widehat{x}\widehat{p}+\widehat{p}%
\widehat{x})$. One can also show that the product $pf(x)$ is, for any smooth
function $f$ of position alone, given in both cases by the symmetric rule
\[
pf(x)\longrightarrow\frac{1}{2}(\widehat{p}f(x)+f(x)\widehat{p}).
\]
It follows that if $H$ is a Hamiltonian of the type
\[
H=\sum_{j=1}^{n}\frac{1}{2m_{j}}(p_{j}-A_{j}(x))^{2}+V(x)
\]
one can use either the Weyl or the Born--Jordan prescriptions to get the the
corresponding quantum operator, which yields the familiar expression%
\[
\widehat{H}=\sum_{j=1}^{n}\frac{1}{2m_{j}}\left(  -i\hbar\frac{\partial
}{\partial x_{j}}-A(x)\right)  ^{2}+V(x).
\]
(See Section \ref{sec3}). Since this Hamiltonian is without doubt the one
which most often occurs in quantum mechanics one could ask why one should
bother about which is the \textquotedblleft correct\textquotedblright%
\ quantization. It turns out that this question is just a little bit more than
academic: there are simple physical observables which yield different
quantizations in the Weyl and Born--Jordan schemes. One interesting example is
that of the squared angular momentum: writing $\mathbf{r}=(x,y,z)$ and
$\mathbf{p}=(p_{x},p_{y},p_{z})$ the square of the classical angular momentum
\begin{equation}
\ell=(yp_{z}-zp_{y})\mathbf{i}+(zp_{x}-xp_{z})\mathbf{j}+(xp_{y}%
-yp_{x})\mathbf{k} \label{elle}%
\end{equation}
is the function $\ell^{2}=\ell_{x}^{2}+\ell_{y}^{2}+\ell_{z}^{2}$ where
\begin{equation}
\ell_{x}^{2}=x^{2}p_{y}^{2}+y^{2}p_{x}^{2}-2xp_{x}yp_{y}%
\end{equation}
and so on. The Weyl quantization of $\ell_{x}^{2}$ is
\begin{equation}
(\widehat{\ell_{x}^{2}})_{\mathrm{W}}=\widehat{x}^{2}\widehat{p}_{y}%
^{2}+\widehat{x}_{y}^{2}\widehat{p}_{x}^{2}-\tfrac{1}{2}(\widehat{x}%
\widehat{p}_{x}+\widehat{p}_{x}\widehat{x})(\widehat{y}\widehat{p}%
_{y}+\widehat{p}_{y}\widehat{y}) \label{lw}%
\end{equation}
while its Born--Jordan quantization is%
\begin{equation}
(\widehat{\ell_{x}^{2}})_{\mathrm{BJ}}=\widehat{x}^{2}\widehat{p}_{y}%
^{2}+\widehat{x}_{y}^{2}\widehat{p}_{x}^{2}-\tfrac{1}{2}(\widehat{x}%
\widehat{p}_{x}+\widehat{p}_{x}\widehat{x})(\widehat{y}\widehat{p}%
_{y}+\widehat{p}_{y}\widehat{y})-\tfrac{1}{6}\hbar^{2};
\end{equation}
similar relations are obtained for $\ell_{y}^{2}$ and $\ell_{z}^{2}$ so that,
in the end,
\begin{equation}
(\widehat{\ell^{2}})_{\mathrm{W}}-(\widehat{\ell^{2}})_{\mathrm{BJ}}=\tfrac
{1}{2}\hbar^{2}.
\end{equation}
This discrepancy has been dubbed the \textquotedblleft angular momentum
dilemma\textquotedblright\cite{dasp}; in \cite{dilemma} we have discussed this
apparent paradox and shown that it disappears if one systematically uses
Born--Jordan quantization.

\subsection{The Kerner and Sutcliffe approach to quantization}

As we have proven in \cite{FPBJ,SPRINGER}, Heisenberg's matrix mechanics
\cite{hei}, as rigorously constructed by Born and Jordan in \cite{bj} and
Born, Jordan, and Heisenberg in \cite{bjh}, explicitly requires the use of the
quantization rule (\ref{bj1}) to be mathematically consistent, a fact which
apparently has escaped the attention of physicists, and philosophers
or\ historians of Science. In the present paper, we will show that the Feynman
path integral approach is another genuinely physical motivation for
Born--Jordan quantization of arbitrary observables; it corrects previous
unsuccessful attempts involving path integral arguments which \textit{do not
work} for a reason that will be explained. One of the most convincing of these
attempts is the paper \cite{Kerner} by Kerner and Sutcliffe. Elaborating on
previous work of Garrod \cite{Garrod} Kerner and Sutcliffe tried to justify
the Born--Jordan rule as the unique possible quantization (see Steven
Kauffmann's \cite{kau,Kauffmann} brilliant discussion of this work). Assuming
that $\widehat{H}$ is the quantization of some general Hamiltonian $H$, they
write as is usual in the theory of the phase space Feynman integral the
propagator as%
\begin{equation}
\langle x|e^{-\frac{i}{\hbar}\widehat{H}t}|x^{\prime}\rangle=\lim
_{N\rightarrow\infty}\int dx_{N-1}\cdot\cdot\cdot dx_{1}%
{\textstyle\prod\nolimits_{k=1}^{N}}
\langle x_{k}|e^{-\frac{i}{\hbar}\widehat{H}\Delta t}|x_{k-1}\rangle\label{1}%
\end{equation}
where $x_{N}=x$ and $x_{0}=x^{\prime}$ are fixed and $\Delta t=t/N$. They
thereafter use the approximation%
\begin{equation}
\langle x_{k}|e^{-\frac{i}{\hbar}\widehat{H}\Delta t}|x_{k-1}\rangle
\thickapprox\frac{1}{2\pi\hbar}\int e^{\frac{i}{\hbar}\overline{S}%
(x,x^{\prime},p,\Delta t)}dp \label{2}%
\end{equation}
the function $\overline{S}$ being given by
\begin{equation}
\overline{S}(x,x^{\prime},p,\Delta t)=p(x-x^{\prime})-\overline{H}%
(x,x^{\prime},p)\Delta t \label{action}%
\end{equation}
where $\overline{H}$ is the time average of $H$ over $p$ fixed and $x=x(t)$,
that is%
\begin{equation}
\overline{H}(x,x^{\prime},p)=\frac{1}{\Delta t}\int_{0}^{\Delta t}H(x^{\prime
}+s\frac{x-x^{\prime}}{\Delta t},p)ds. \label{3}%
\end{equation}
Notice that introducing the dimensionless parameter $\tau=s/\Delta t$, formula
(\ref{3}) can be written in the more convenient form
\begin{equation}
\overline{H}(x,x^{\prime},p)=\int_{0}^{1}H(\tau x+(1-\tau)x^{\prime},p)ds
\label{Hbar}%
\end{equation}
which is the usual mathematical definition of Born--Jordan quantization: see
de Gosson \cite{58,SPRINGER} and de Gosson and Luef \cite{golu1}.

Taking the limit $\Delta t\rightarrow0$ the operator $\widehat{H}$ can then be
explicitly and uniquely determined, and Kerner and Sutcliffe show that in
particular this leads to the Born--Jordan ordering (\ref{bj1}) when their
Hamiltonian $H$ is a monomial $x^{m}p^{\ell}$. Unfortunately (as
immediately\footnote{Cohen's rebukal was published in the same volume of
\textit{J. Math. Phys.} in which Kerner and Sutcliffe published their
results.} noted by Cohen \cite{Cohenwrong}) there are many a priori equally
good constructions of the Feynman integral, leading to other quantization
rules. In fact, argues Cohen, there is a great freedom of choice in
calculating the action $p(x-x^{\prime})-\overline{H}$ appearing in the
right-hand side of (\ref{3}). For instance, one can choose
\begin{equation}
S(x,x^{\prime},p,\Delta t)=p(x-x^{\prime})-H(\tfrac{1}{2}(x+x^{\prime
}),p)\Delta t
\end{equation}
which leads for $x^{m}p^{\ell}$ to Weyl's rule (\ref{w2}), or one can choose%
\begin{equation}
S(x,x^{\prime},p,\Delta t)=p(x-x^{\prime})-\tfrac{1}{2}(H(x,p)+H(x^{\prime
},p))\Delta t,
\end{equation}
which leads to the symmetric rule%
\begin{equation}
x^{m}p^{\ell}\longrightarrow\frac{1}{2}(\widehat{x}^{m}\widehat{p}^{\ell
}+\widehat{p}^{\ell}\widehat{x}^{m}).
\end{equation}
This ambiguity shows -- in an obvious way -- that Feynman path integral theory
does not lead to an uniquely defined quantization scheme for observables.
However -- and this is the main point of the present paper -- while Cohen's
remark was mathematically justified, Kerner and Sutcliffe's insight was right
(albeit for the wrong reason).

\subsection{What we will do}

It turns out that the formula (\ref{action}) for the approximate action that
Kerner and Sutcliffe \textquotedblleft guessed\textquotedblright\ has been
justified independently (in another context) by Makri and Miller
\cite{makmil1,makmil2} and the present author \cite{ICP} by rigorous
mathematical methods. This formula is actually the \emph{correct}
approximation to action up to order $O(\Delta t^{2})$ (as opposed to the
\textquotedblleft midpoint rules\textquotedblright\ commonly used in the
theory of the Feynman integral which yield much cruder approximations); it
follows that Kerner and Sutcliffe's formula (\ref{2}) indeed yields a correct
approximation of the infinitesimal propagator $\langle x_{k}|e^{-\frac
{i}{\hbar}\widehat{H}\Delta t}|x_{k-1}\rangle$, in fact the \emph{best one
}for calculational purposes since it ensures a swift convergence of numerical
schemes. This is because for short times $\Delta t$ the solution of
Schr\"{o}dinger's equation
\begin{equation}
i\hbar\frac{\partial\psi}{\partial t}(x,t)=\left[  \sum_{j=1}^{n}\frac
{-\hbar^{2}}{2m_{j}}\frac{\partial^{2}}{\partial x_{j}^{2}}+V(x)\right]
\psi(x,t) \label{a}%
\end{equation}
with initial condition $\psi(x,0)=\psi_{0}(x)$ is given by the asymptotic
formula
\begin{equation}
\psi(x,\Delta t)=\int\overline{K}(x,x^{\prime},\Delta t)\psi_{0}(x^{\prime
})d^{n}x^{\prime}+O(\Delta t^{2}); \label{b}%
\end{equation}
the approximate propagator $\overline{K}$ being defined, for arbitrary time
$t$, by%
\begin{equation}
\overline{K}(x,x^{\prime},t)=\left(  \tfrac{1}{2\pi\hbar}\right)  ^{n}\int
\exp\left(  \tfrac{i}{\hbar}\left[  p(x-x^{\prime})-(H_{\mathrm{free}%
}(p)+\overline{V}(x,x^{\prime}))t\right]  \right)  d^{n}p, \label{c}%
\end{equation}
where, by definition, $H_{\mathrm{free}}(p)$ is the free particle Hamiltonian
function, and the two-point function
\[
\overline{V}(x,x^{\prime})=\int_{0}^{1}V(\tau x+(1-\tau)x^{\prime})d\tau
\]
is the average value of the potential $V$ on the line segment $[x^{\prime},x]$.

\begin{itemize}
\item In Section \ref{sec1} we discuss the accuracy of Kerner and Sutcliffe's
propagator by comparing it with the more familiar Van Vleck propagator; we
show that for small times both are approximations to order $O(t^{2})$ to the
exact propagator of Schr\"{o}dinger's equation.

\item In Section \ref{sec2} we show that if one assume's that short-time
evolution of the wavefunction (for an arbitrary Hamiltonian $H$) is given by
the Kerner and Sutcliffe propagator, then $H$ must be quantized following the
rule (\ref{Hbar}); we thereafter show that when $H$ is a monomial
$x^{m}p^{\ell}$ then the corresponding operator is given by the Born--Jordan
rule (\ref{bj1}), \emph{not} by the Weyl rule \ref{w2}.
\end{itemize}

\begin{notation}
The generalized position and momentum vectors are $x=(x_{1},...,x_{n})$ and
$p=(p_{1},...,p_{n})$; we set $px=p_{1}x_{1}+\cdot\cdot\cdot+p_{n}x_{n}$. We
denote by $\widehat{x}_{j}$ the operator of multiplication by $x_{j}$ and by
$\widehat{p}_{j}$ the momentum operator $-i\hbar(\partial/\partial x_{j})$.
\end{notation}

\section{On Short-Time Propagators\label{sec1}}

In this section we only consider Hamiltonian functions of the type
\textquotedblleft kinetic energy plus potential\textquotedblright:
\begin{equation}
H(x,p)=H_{\mathrm{free}}(p)+V(x)\text{ \ , \ }H_{\mathrm{free}}(p)=\sum
_{j=1}^{n}\frac{1}{2m_{j}}p_{j}^{2}. \label{Hfree}%
\end{equation}

\subsection{The Van Vleck Propagator}

Consider a Hamiltonian function of the type (\ref{Hfree}) above; the
corresponding Schr\"{o}dinger equation is%
\begin{equation}
i\hbar\frac{\partial\psi}{\partial t}(x,t)=\left[  \sum_{j=1}^{n}\frac
{-\hbar^{2}}{2m_{j}}\frac{\partial^{2}}{\partial x_{j}^{2}}+V(x)\right]
\psi(x,t). \label{schrodeq}%
\end{equation}
We will denote by $K(x,x^{\prime},t)=\langle x|e^{-\frac{i}{\hbar}\widehat
{H}t}|x^{\prime}\rangle$ the corresponding exact propagator:
\begin{equation}
\psi(x,t)=\int K(x,x^{\prime},t)\psi_{0}(x^{\prime})d^{n}x^{\prime}
\label{kernel}%
\end{equation}
where$\ $with $\psi_{0}(x)$ is the value of $\psi$ at time $t=0$. The function
$K(x,x^{\prime},t)$ must thus satisfy the boundary condition
\begin{equation}
\lim_{t\rightarrow0}K(x,x^{\prime},t)=\delta(x-x^{\prime}). \label{cond1}%
\end{equation}

It is well-known (see \textit{e.g}. Gutzwiller \cite{Gutz}, Schulman
\cite{schulman}, de Gosson \cite{ICP}, Maslov and Fedoriuk \cite{MF}) that for
short times an approximate propagator is given by Van Vleck's formula%
\begin{equation}
\widetilde{K}(x,x^{\prime},t)=\left(  \tfrac{1}{2\pi i\hbar}\right)
^{n/2}\sqrt{\rho(x,x^{\prime},t)}e^{\frac{i}{\hbar}S(x,x^{\prime},t)}
\label{vanvleck}%
\end{equation}
where
\begin{equation}
S(x,x^{\prime},t)=\int_{0}^{t}\left(
{\textstyle\sum_{j=1}^{n}}
\tfrac{1}{2}m_{j}\dot{x}_{j}(s)^{2}-V(x(s)\right)  ds \label{actionex}%
\end{equation}
is the action along the classical trajectory leading from $x^{\prime}$ at time
$t^{\prime}=0$ to $x$ at time $t$ (there is no sum over different classical
trajectories because only one trajectory contributes in the limit
$t\rightarrow0$ \cite{makmil1}) and%
\begin{equation}
\rho(x,x^{\prime},t)=\det\left(  -\frac{\partial^{2}S(x,x^{\prime}%
,t)}{\partial x_{j}\partial x_{jk}^{\prime}}\right)  _{1\leq j,k\leq n}
\label{rho}%
\end{equation}
is the Van Vleck density of trajectories \cite{ICP,Gutz,schulman}; the
argument of the square root is chosen so that the initial condition
(\ref{cond1}) is satisfied \cite{ICP,Birkbis}. It should be emphasized that
although the Van Vleck propagator is frequently used in semiclassical
mechanics, it has nothing \textquotedblleft semiclassical\textquotedblright%
\ \textit{per se}, since it is genuinely an approximation to the exact
propagator for small $t$ -- not just in the limit $\hbar\rightarrow0$. In fact:

\begin{theorem}
\label{Thm2}Let $\widetilde{\psi}$ be given by%
\[
\widetilde{\psi}(x,t)=\int\widetilde{K}(x,x^{\prime},t)\psi_{0}(x^{\prime
})d^{n}x^{\prime}%
\]
where $\psi_{0}$ is a tempered distribution. Let $\psi$ be the exact solution
of Schr\"{o}dinger's equation with initial datum $\psi_{0}$. We have%
\begin{equation}
\psi(x,t)-\widetilde{\psi}(x,t)=O(t^{2}). \label{estpsitilde}%
\end{equation}
In particular, the Van Vleck propagator $\widetilde{K}(x,x^{\prime},t)$ is an
$O(t^{2})$ approximation to the exact propagator $K(x,x^{\prime},t)$:%
\begin{equation}
K(x,x^{\prime},t)-\widetilde{K}(x,x^{\prime},t)=O(t^{2}) \label{kk}%
\end{equation}
for $t\rightarrow0$ and hence
\[
\lim_{t\rightarrow0}\widetilde{K}(x,x^{\prime},t)=\delta(x-x^{\prime}).
\]

\end{theorem}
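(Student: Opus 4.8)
The plan is to verify directly that $\widetilde\psi$ solves Schrödinger's equation up to an error of order $O(t^2)$, and then invoke uniqueness (via Duhamel's principle) to conclude that $\widetilde\psi - \psi = O(t^2)$. To this end, I would first establish a short-time expansion of the Van Vleck ingredients. For a Hamiltonian of the form (\ref{Hfree}), the classical trajectory from $x'$ at time $0$ to $x$ at time $t$ satisfies $x(s) = x' + s(x-x')/t + O(t^2)$ uniformly, whence the action (\ref{actionex}) expands as
\[
S(x,x',t) = \sum_{j=1}^{n}\frac{m_j(x_j - x_j')^2}{2t} - t\,\overline V(x,x') + O(t^3),
\]
with $\overline V(x,x') = \int_0^1 V(\tau x + (1-\tau)x')\,d\tau$ the line-segment average appearing in the introduction. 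Differentiating this twice in $x_j, x_k'$ gives $\rho(x,x',t) = \prod_j (m_j/t)\,(1 + O(t^2))$, so that the prefactor $(2\pi i\hbar)^{-n/2}\sqrt{\rho}$ matches that of the free propagator up to relative error $O(t^2)$.

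Next I would plug $\widetilde K$ into the Schrödinger operator $i\hbar\partial_t - \widehat H$ and show the result is $O(t)$ as a distributional kernel, so that the error accumulated over $[0,t]$ is $O(t^2)$. Concretely, writing $\widetilde\psi(x,t) = \int \widetilde K(x,x',t)\psi_0(x')\,dx'$, one computes $i\hbar\partial_t\widetilde\psi - \widehat H\widetilde\psi$; the leading terms cancel because $S$ solves the Hamilton–Jacobi equation and $\rho$ solves the continuity (transport) equation along the classical flow — these are exactly the two conditions that make the WKB ansatz a formal solution, and here they hold up to the order to which $S$ and $\rho$ have been expanded. The residual is the usual ``quantum correction'' term $-\tfrac{\hbar^2}{2m_j}\partial_j^2(\sqrt\rho)/\sqrt\rho$ times $\widetilde K$, together with the $O(t^3)$ error in $S$ producing an $O(t^2)$ term after division by $\hbar/i$ in the exponent and $\partial_t$; collecting, one gets $(i\hbar\partial_t - \widehat H)\widetilde K = R(x,x',t)$ with $R = O(t)$ in the appropriate sense. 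Then Duhamel's formula $\psi(x,t) - \widetilde\psi(x,t) = -\tfrac{i}{\hbar}\int_0^t e^{-\tfrac{i}{\hbar}\widehat H(t-\sigma)}\big(\text{(error source at time }\sigma)\big)\,d\sigma$, combined with the unitarity of $e^{-\tfrac{i}{\hbar}\widehat H s}$ on $L^2$ (and its boundedness on the relevant distribution spaces), yields $\psi - \widetilde\psi = O(t^2)$, which is (\ref{estpsitilde}). Taking $\psi_0 = \delta(\cdot - x')$ then gives (\ref{kk}), and since $K(x,x',t) \to \delta(x-x')$ by (\ref{cond1}), the same holds for $\widetilde K$.

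The main obstacle I anticipate is making the error estimates uniform and rigorous at the level of \emph{tempered distributions} rather than nice $L^2$ data: the Van Vleck kernel is highly oscillatory, so ``$O(t^2)$'' must be interpreted in a topology (e.g. on $\mathcal S'$, or after testing against Schwartz functions, or with operator norms on suitable Sobolev-type scales) in which the stationary-phase / oscillatory-integral estimates for the $p$-integral representation of $\overline K$ converge and can be differentiated under the integral sign. In practice I would reduce to the momentum-space representation (\ref{c}) — where $\widetilde K$ and $\overline K$ agree to this order, as the expansion above shows — and run the stationary phase estimate there, controlling remainders by the smoothness and polynomial bounds on $V$ (assumed implicitly); alternatively one can cite the rigorous short-time parametrix constructions of Makri–Miller \cite{makmil1,makmil2} and de Gosson \cite{ICP,Birkbis} for exactly this statement. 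The Hamilton–Jacobi and transport cancellations themselves are routine once the expansions are in hand.
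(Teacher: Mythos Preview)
Your approach is correct but takes a genuinely different route from the paper. The paper proceeds by a direct first-order Taylor comparison in $t$: for Schwartz $\psi_0$ one writes
\[
\psi(x,t)=\Bigl[1+\tfrac{t}{i\hbar}\bigl(-\tfrac{\hbar^2}{2m}\partial_x^2+V(x)\bigr)\Bigr]\psi_0(x)+O(t^2),
\]
and then uses the short-time action expansion (Theorem~\ref{Thm1}) to expand $e^{iS/\hbar}$ and show that $\widetilde\psi$ has the \emph{same} first-order Taylor polynomial, whence $\psi-\widetilde\psi=O(t^2)$; the extension to tempered distributions and the kernel estimate (\ref{kk}) follow by density. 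There is no Duhamel step and no appeal to the transport equation.

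Your residual/Duhamel strategy is the standard PDE-parametrix argument and works as well. One clarification: in the Van Vleck propagator (\ref{vanvleck}) the action $S$ and the density $\rho$ are the \emph{exact} classical quantities, not truncations of an expansion, so Hamilton--Jacobi and the continuity equation hold \emph{exactly} and there is no ``$O(t^3)$ error in $S$'' to track. The entire residual of $(i\hbar\partial_t-\widehat H)\widetilde K$ is the quantum-potential term $-\tfrac{\hbar^2}{2m}\,(\Delta_x\sqrt\rho)/\sqrt\rho\cdot\widetilde K$, and with your own expansion $\rho=\prod_j(m_j/t)\,(1+O(t^2))$ this prefactor is actually $O(t^2)$, not merely $O(t)$; Duhamel then yields an estimate at least as strong as (\ref{estpsitilde}). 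What each approach buys: the paper's Taylor comparison is more elementary and sidesteps any analysis of the singular kernel near $s=0$ in the Duhamel integral; your approach is more in the spirit of WKB/parametrix constructions, makes the role of the Hamilton--Jacobi and transport equations transparent, and in principle gives a sharper remainder.
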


\begin{proof}
Referring to de Gosson \cite{ICP} (Lemma 241) for details, we sketch the main
lines in the case $n=1$. Assuming that $\psi_{0}$ belongs to the Schwartz
space $\mathcal{S}(\mathbb{R}^{n})$ of rapidly decreasing functions, one
expands the solution $\psi$ of Schr\"{o}dinger's equation to second order:
\[
\psi(x,t)=\psi_{0}(x)+\frac{\partial\psi}{\partial t}(x,0)t+O(t^{2}).
\]
Taking into account the fact that $\psi$ is a solution of Schr\"{o}dinger's
equation this can be rewritten
\begin{equation}
\psi(x,t)=\left[  1+\frac{t}{i\hbar}\left(  -\frac{\hbar^{2}}{2m}%
\frac{\partial^{2}}{\partial x^{2}}+V(x)\right)  \right]  \psi_{0}%
(x)+O(t^{2}). \label{icp1}%
\end{equation}
Expanding the exponential $e^{iS/\hbar}$ in Van Vleck's formula
(\ref{vanvleck}) at $t=0$ one shows, using the estimate (\ref{estimateICP}) in
Theorem \ref{Thm1}, that we also have%
\begin{equation}
\widetilde{\psi}(x,t)=\left[  1+\frac{t}{i\hbar}\left(  -\frac{\hbar^{2}}%
{2m}\frac{\partial^{2}}{\partial x^{2}}+V(x)\right)  \right]  \psi
_{0}(x)+O(t^{2}); \label{icp2}%
\end{equation}
comparison with (\ref{icp1}) implies that $\psi(x,t)-\widetilde{\psi
}(x,t)=O(t^{2})$. By density of the Schwartz space in the class of tempered
distributions $\mathcal{S}^{\prime}(\mathbb{R}^{n})$ the estimate
(\ref{estpsitilde}) is valid if one chooses $\psi_{0}(x)=\delta(x-x_{0})$,
which yields formula (\ref{kk}) since we have%
\[
\int\widetilde{K}(x,x^{\prime},t)\delta(x-x_{0})d^{n}x^{\prime}=\widetilde
{K}(x,x_{0},t)
\]
and%
\[
\int K(x,x^{\prime},t)\delta(x-x_{0})d^{n}x^{\prime}=K(x,x_{0},t).
\]

\end{proof}

Let us briefly return to the path integral. Replacing the terms $\langle
x_{k}|e^{-\frac{i}{\hbar}\widehat{H}\Delta t}|x_{k-1}\rangle$ in the product
formula (\ref{1}) with $\widetilde{K}(x_{k-1},x_{k-1},\Delta t)$ one shows,
using the Lie--Trotter formula \cite{ICP,schulman}, that the exact propagator
$K(x,x^{\prime},t)=\langle x|e^{-\frac{i}{\hbar}\widehat{H}t}|x^{\prime
}\rangle$ is given by
\begin{equation}
\langle x|e^{-\frac{i}{\hbar}\widehat{H}t}|x^{\prime}\rangle=\lim
_{N\rightarrow\infty}\int dx_{N-1}\cdot\cdot\cdot dx_{1}%
{\textstyle\prod\nolimits_{k=1}^{N}}
\widetilde{K}(x_{k-1},x_{k-1},\Delta t).
\end{equation}
This formula is often taken as the starting point of path integral arguments:
observing that the expression (\ref{vanvleck}) is in most cases\footnote{The
free particle and the harmonic oscillator are noticeable cases where the
action integral can be explicitl claculate and thus yields an exlicit formula
for the propagator, but mathematically speaking this fact is rather a
consequence of the theory of the metaplectic group \cite{ICP,Birkbis}}
difficult to calculate (it implies the computation of an action integral,
which can be quite cumbersome) people working in the theory of the Feynman
integral replace the exact action $S(x,x^{\prime},t)$ in (\ref{vanvleck}) with
approximate expressions, for instance the \textquotedblleft midpoint
rules\textquotedblright\ that we will be discussed below.\ Now, one should be
aware that this \textit{legerdemain} works\textit{, }because when taking the
limit $N\rightarrow\infty$ one indeed obtains the correct propagator, but it
does \emph{not} imply that these midpoint rules are accurate approximations to
$S(x,x^{\prime},t)$.

\subsection{The Kerner--Sutcliffe propagator}

We showed above that the Van Vleck propagator is an approximation to order
$O(t^{2})$ to the exact propagator. We now show that the propagator proposed
by Kerner and Sutcliffe in \cite{Kerner} approximates the Van Vleck propagator
also at order $O(t^{2})$. We begin by giving a correct short-time
approximation to the action.

\begin{theorem}
\label{Thm1}The function $\overline{S}$ defined by
\begin{equation}
\overline{S}(x,x^{\prime},t)=\sum_{j=1}^{n}m_{j}\frac{(x_{j}-x_{j}^{\prime
})^{2}}{2t}-\overline{V}(x,x^{\prime})t \label{sbar}%
\end{equation}
where $\overline{V}(x,x^{\prime})$ is the average of the potential $V$ along
the line segment $[x^{\prime},x]:$
\[
\overline{V}(x,x^{\prime})=\int_{0}^{1}V(\tau x+(1-\tau)x^{\prime})d\tau.
\]
satisfies for $t\rightarrow0$ the estimate
\begin{equation}
S(x,x^{\prime},t)-\overline{S}(x,x^{\prime},t)=O(t^{2}). \label{estimateICP}%
\end{equation}

\end{theorem}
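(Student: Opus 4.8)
The plan is to compute the exact action $S(x,x',t)$ along the classical trajectory by a short-time expansion and compare it term by term with the proposed expression $\overline S(x,x',t)$. First I would set up the classical trajectory $x(s)$, $0\le s\le t$, joining $x'$ at $s=0$ to $x$ at $s=t$, as a solution of Newton's equations $m_j\ddot x_j(s)=-\partial V/\partial x_j(x(s))$. For short times this trajectory is a perturbation of the free (straight-line, constant-velocity) path $x^{(0)}(s)=x'+s\,(x-x')/t$; writing $x(s)=x^{(0)}(s)+y(s)$ with $y(0)=y(t)=0$, one gets from the equations of motion that $y(s)=O(t^2)$ uniformly (more precisely $\|y\|_\infty=O(t^2)$ and $\|\dot y\|_\infty=O(t)$), since the forcing term $-\partial V/\partial x_j$ is $O(1)$ and the Green's function of $d^2/ds^2$ on $[0,t]$ with Dirichlet endpoints carries two factors of $t$. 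This is the estimate I would establish first, as a lemma, because everything else rests on it.

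Next I would insert this expansion into the action functional $S(x,x',t)=\int_0^t\big(\sum_j\tfrac12 m_j\dot x_j(s)^2-V(x(s))\big)\,ds$. For the kinetic term, $\sum_j\tfrac12 m_j\dot x_j^2 = \sum_j\tfrac12 m_j(\dot x_j^{(0)})^2 + \sum_j m_j\dot x_j^{(0)}\dot y_j + \sum_j\tfrac12 m_j\dot y_j^2$. The first piece integrates to $\sum_j m_j(x_j-x_j')^2/(2t)$, which is exactly the kinetic part of $\overline S$. The cross term integrates to zero after an integration by parts: $\int_0^t \dot x_j^{(0)}\dot y_j\,ds = [\dot x_j^{(0)} y_j]_0^t - \int_0^t \ddot x_j^{(0)} y_j\,ds = 0$ because $\ddot x^{(0)}=0$ and $y$ vanishes at both endpoints — this is the one genuinely clean cancellation and it is what makes the straight-line approximation correct to this order. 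The remaining kinetic piece $\int_0^t\tfrac12 m_j\dot y_j^2\,ds$ is $O(t)\cdot O(t^2)^{\,2}\!/t$... more carefully, $\dot y=O(t)$ so the integrand is $O(t^2)$ and the integral is $O(t^3)=O(t^2)$. For the potential term, $-\int_0^t V(x(s))\,ds = -\int_0^t V(x^{(0)}(s))\,ds - \int_0^t \nabla V(x^{(0)}(s))\cdot y(s)\,ds + O(\cdots)$; the first integral, after the substitution $\tau=s/t$, is exactly $-\overline V(x,x')\,t$, and the correction is $t\cdot O(1)\cdot O(t^2)=O(t^3)$.

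Collecting, $S(x,x',t)=\sum_j m_j(x_j-x_j')^2/(2t) - \overline V(x,x')\,t + O(t^2) + O(t^3) = \overline S(x,x',t)+O(t^2)$, which is \eqref{estimateICP}. The main obstacle is the uniformity and rigor of the perturbation estimate $y(s)=O(t^2)$: one must control the solution of the nonlinear two-point boundary value problem for Newton's equations on $[0,t]$, show it exists and is unique for $t$ small (a contraction-mapping argument on the integral reformulation $y(s)=-\int_0^t G_t(s,\sigma)\,m^{-1}\nabla V(x^{(0)}(\sigma)+y(\sigma))\,d\sigma$, with $G_t$ the Dirichlet Green's function scaling like $t$), and verify that the constants in the $O$-estimates are locally uniform in $(x,x')$ — which requires $V$ to be smooth with bounded derivatives on the relevant region, an hypothesis implicitly in force. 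A subtlety worth flagging is that the substitution $\tau=s/t$ must genuinely reproduce $\int_0^1 V(\tau x+(1-\tau)x')\,d\tau$; this is immediate once $x^{(0)}(s)=x'+s(x-x')/t$ and is the point at which the \emph{Born--Jordan} average, rather than a midpoint or endpoint value, emerges naturally from the dynamics. I would refer to de Gosson \cite{ICP} (Lemma 241) for the detailed estimates and present here only this structural outline.
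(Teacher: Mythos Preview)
Your argument is correct, but it follows a genuinely different route from the paper's. The paper does not perturb the classical trajectory at all; instead it uses that the exact action $S(x,x',t)$ solves the Hamilton--Jacobi equation
\[
\frac{\partial S}{\partial t}+\sum_{j=1}^{n}\frac{1}{2m_{j}}\left(\frac{\partial S}{\partial x_{j}}\right)^{2}+V(x)=0,
\]
posits an asymptotic ansatz $S=\tfrac{1}{t}S_{0}(x,x')+S_{1}(x,x')\,t+S_{2}(x,x')\,t^{2}+\cdots$, inserts it, and reads off $S_{0}=\sum_{j}m_{j}(x_{j}-x_{j}')^{2}/2$ and $S_{1}=-\overline{V}(x,x')$ by matching powers of $t$. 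Your approach is Lagrangian (perturb the path, compute the action integral directly), the paper's is Hamiltonian (work with the PDE satisfied by $S$). What you gain is a self-contained argument with explicit control of the remainder: the contraction-mapping lemma for the Dirichlet problem on $[0,t]$ and the bound $\|y\|_{\infty}=O(t^{2})$, $\|\dot y\|_{\infty}=O(t)$ give honest $O(t^{2})$ error terms, and the integration-by-parts cancellation of the cross term is the mechanism that singles out the straight-line average. What the paper's method gains is systematicity: the hierarchy for $S_{2},S_{3},\dots$ can be solved in the same way, producing approximations to any order without revisiting the trajectory analysis. One small correction: the reference you give at the end, Lemma~241 of \cite{ICP}, is the estimate the paper invokes for Theorem~\ref{Thm2} (the Van~Vleck propagator), not for the present Theorem~\ref{Thm1}; for this result the paper points to Makri--Miller \cite{makmil1,makmil2} and \cite{ICP} more generally.
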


For detailed proofs we refer to the aforementioned papers
\cite{makmil1,makmil2} by Makri and Miller, and to our book \cite{ICP}; also
see de Gosson and Hiley \cite{gohi2,gohi3}. The underlying idea is quite
simple (and already appears in germ in Park's book \cite{Park}, p.438): one
remarks that the function $S=S(x,x^{\prime},t)$ satisfies the Hamilton--Jacobi
equation%
\begin{equation}
\frac{\partial S}{\partial t}+\sum_{j=1}^{n}\frac{1}{2m_{j}}\left(
\frac{\partial S}{\partial x_{j}}\right)  ^{2}+V(x)=0 \label{hamjac}%
\end{equation}
and one thereafter looks for an asymptotic solution%
\[
S(x,x^{\prime},t)=\frac{1}{t}S_{0}(x,x^{\prime})+S_{1}(x,x^{\prime}%
)t+S_{2}(x,x^{\prime})t^{2}+\cdot\cdot\cdot.
\]
Insertion in (\ref{hamjac}) then leads to
\[
S_{0}(x,x^{\prime})=\sum_{j=1}^{n}m_{j}\frac{(x_{j}-x_{j}^{\prime})^{2}}{2}%
\]
and $S_{1}(x,x^{\prime})=-\overline{V}(x,x^{\prime})$ hence (\ref{sbar}).
Notice that this procedure actually allows one to find approximations to $S$
to an arbitrary order of accuracy by solving successively the equations
satisfied by $S_{2}$ ,$S_{3},..$ (see \cite{makmil1,makmil2} for explicit formulas).

Let us now set
\[
\overline{H}(x,x^{\prime},t)=H_{\mathrm{free}}(p)+\overline{V}(x,x^{\prime})
\]
where
\[
\overline{V}(x,x^{\prime})=\int_{0}^{1}V(\tau x+(1-\tau)x^{\prime})d\tau
\]
is the averaged potential.

Let us now show that the propagator postulated by Garrod \cite{Garrod} and
Kerner and Sutcliffe \cite{Kerner} is as good an approximation to the exact
propagator as Van Vleck's is. We recall the textbook Fourier formula%
\begin{equation}
\left(  \tfrac{1}{2\pi\hbar}\right)  ^{n}\int e^{\frac{i}{\hbar}p(x-x^{\prime
})}p_{j}^{\ell}d^{n}p=\left(  -i\hbar\tfrac{\partial}{\partial x_{j}}\right)
^{\ell}\delta(x-x^{\prime}). \label{fou2}%
\end{equation}

\begin{theorem}
\label{Thm1bis}Let $\overline{K}=\overline{K}(x,x^{\prime},t)$ be defined (in
the distributional sense) by
\begin{equation}
\overline{K}(x,x^{\prime},t)=\left(  \tfrac{1}{2\pi\hbar}\right)  ^{n}\int
e^{\tfrac{i}{\hbar}(p(x-x^{\prime})-\overline{H}(x,x^{\prime},p)t)}d^{n}p.
\label{KSpropagator}%
\end{equation}
and set
\begin{equation}
\overline{\psi}(x,t)=\int\overline{K}(x,x^{\prime},t)\psi_{0}(x^{\prime}%
)d^{n}x^{\prime}. \label{psibardef}%
\end{equation}
Let $\psi$ be the solution of Schr\"{o}dinger's equation with initial
condition $\psi_{0}$. We have
\begin{equation}
\overline{\psi}(x,t)-\psi(x,t)=O(t^{2}). \label{psipsibar}%
\end{equation}
The function $\overline{K}$ is an $O(t^{2})$ approximation to the exact
propagator $K$:
\begin{equation}
K(x,x^{\prime},t)-\overline{K}(x,x^{\prime},t)=O(t^{2}). \label{essentiel}%
\end{equation}

\end{theorem}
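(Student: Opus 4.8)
The plan is to imitate the argument used for Theorem~\ref{Thm2}: expand $\overline{\psi}(x,t)$ to first order in $t$, identify the coefficient of $t$ as $\tfrac{1}{i\hbar}$ times the Schr\"odinger operator applied to $\psi_{0}$, and compare with the corresponding expansion of the exact solution. Assume first $\psi_{0}\in\mathcal{S}(\mathbb{R}^{n})$. Writing $\overline{H}(x,x^{\prime},p)=H_{\mathrm{free}}(p)+\overline{V}(x,x^{\prime})$ and Taylor-expanding the exponential in (\ref{KSpropagator}) gives
\[
\overline{K}(x,x^{\prime},t)=\left(\tfrac{1}{2\pi\hbar}\right)^{n}\int e^{\frac{i}{\hbar}p(x-x^{\prime})}\Bigl[1-\tfrac{it}{\hbar}\bigl(H_{\mathrm{free}}(p)+\overline{V}(x,x^{\prime})\bigr)\Bigr]d^{n}p+R(x,x^{\prime},t),
\]
where $R$ is the quadratic remainder. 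Pairing with $\psi_{0}$ and using the Fourier identity (\ref{fou2}) term by term, the constant term reproduces $\psi_{0}(x)$, the term $\sum_{j}\tfrac{1}{2m_{j}}p_{j}^{2}$ produces $-\sum_{j}\tfrac{\hbar^{2}}{2m_{j}}\partial_{x_{j}}^{2}\psi_{0}(x)$, and the $\overline{V}$-term produces $\overline{V}(x,x)\psi_{0}(x)=V(x)\psi_{0}(x)$, since $\overline{V}(x,x)=\int_{0}^{1}V(x)\,d\tau=V(x)$. One therefore obtains
\[
\overline{\psi}(x,t)=\Bigl[1+\tfrac{t}{i\hbar}\Bigl(-\sum_{j=1}^{n}\tfrac{\hbar^{2}}{2m_{j}}\tfrac{\partial^{2}}{\partial x_{j}^{2}}+V(x)\Bigr)\Bigr]\psi_{0}(x)+O(t^{2}),
\]
which is exactly the expansion (\ref{icp1}) of the exact Schr\"odinger solution $\psi$ (in its obvious $n$-dimensional form); hence $\overline{\psi}(x,t)-\psi(x,t)=O(t^{2})$, i.e.\ (\ref{psipsibar}). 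The propagator estimate (\ref{essentiel}) then follows, exactly as in Theorem~\ref{Thm2}, by density of $\mathcal{S}(\mathbb{R}^{n})$ in $\mathcal{S}^{\prime}(\mathbb{R}^{n})$: taking $\psi_{0}=\delta(\cdot-x_{0})$ gives $\overline{\psi}(x,t)=\overline{K}(x,x_{0},t)$ and $\psi(x,t)=K(x,x_{0},t)$ (and in particular $\overline{K}(x,x^{\prime},t)\to\delta(x-x^{\prime})$ as $t\to 0$).

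An equivalent and more geometric route, which exhibits the link with the previous subsection, is to carry out the Gaussian $p$-integration in (\ref{KSpropagator}) explicitly. Since $H_{\mathrm{free}}$ is quadratic, completing the square in each $p_{j}$ yields
\[
\overline{K}(x,x^{\prime},t)=\left(\tfrac{1}{2\pi i\hbar t}\right)^{n/2}\Bigl(\prod_{j=1}^{n}m_{j}\Bigr)^{1/2}e^{\frac{i}{\hbar}\overline{S}(x,x^{\prime},t)},
\]
with $\overline{S}$ the approximate action (\ref{sbar}). A short computation using $\overline{\rho}:=\det\bigl(-\partial^{2}\overline{S}/\partial x_{j}\partial x_{k}^{\prime}\bigr)=\prod_{j}(m_{j}/t)\,(1+O(t^{2}))$ shows that this is, up to a factor $1+O(t^{2})$, the Van Vleck ansatz (\ref{vanvleck}) with $S$ replaced by $\overline{S}$ and $\rho$ by $\overline{\rho}$. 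By Theorem~\ref{Thm1} one has $S-\overline{S}=O(t^{2})$ (together with the corresponding estimate on the mixed second derivatives, coming from the Hamilton--Jacobi asymptotics), so $e^{\frac{i}{\hbar}(S-\overline{S})}=1+O(t^{2})$ and $\rho/\overline{\rho}=1+O(t^{2})$; hence $\widetilde{K}$ and $\overline{K}$ differ by a relative $O(t^{2})$, and (\ref{essentiel}) follows from Theorem~\ref{Thm2} via $\overline{\psi}-\psi=(\overline{\psi}-\widetilde{\psi})+(\widetilde{\psi}-\psi)$.

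The main obstacle in either route is making these manipulations rigorous at the level of the distributional kernels and, above all, controlling the remainder uniformly. In the first argument $R(x,x^{\prime},t)$ involves $\bigl(H_{\mathrm{free}}(p)+\overline{V}(x,x^{\prime})\bigr)^{2}$, which grows like $|p|^{4}$ and whose $p$-Fourier transform is a fourth-order derivative of $\delta$; one must check that, paired with a Schwartz datum, it contributes only $O(t^{2})$, and that this bound is stable enough to survive the limit $\psi_{0}\to\delta$ in the density argument. This is precisely the oscillatory-integral estimate carried out in de~Gosson \cite{ICP} (Lemma 241), to which we refer for the analytic details; everything else is the elementary bookkeeping indicated above.
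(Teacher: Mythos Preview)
Your primary route---Taylor-expand the exponential in (\ref{KSpropagator}) to first order in $t$, use the Fourier identity (\ref{fou2}) for the kinetic piece and the observation $\overline{V}(x,x)=V(x)$ for the potential piece, recover the expansion (\ref{icp1}), and then deduce (\ref{essentiel}) by the density argument of Theorem~\ref{Thm2}---is exactly the paper's proof (carried out there for $n=1$). The paper offers no more analytic control of the remainder than you do; your caveat about $R$ and the reference to \cite{ICP} are appropriate.

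Your second, geometric route (perform the Gaussian $p$-integral, recognize $\overline{K}$ as the Van Vleck ansatz with $\overline{S}$ in place of $S$, then invoke Theorems~\ref{Thm1} and~\ref{Thm2}) does not appear in the paper's proof of this theorem, though it fits the surrounding discussion. It makes the relation $\widetilde{K}-\overline{K}=O(t^{2})$ transparent, at the cost of needing the $O(t^{2})$ estimate for the mixed second derivatives of $S-\overline{S}$ (plausible from the Hamilton--Jacobi asymptotics but not stated separately in the paper); the paper's direct expansion sidesteps this by never comparing $\overline{K}$ with $\widetilde{K}$ at all.
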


\begin{proof}
It is sufficient to prove (\ref{psipsibar}); formula (\ref{essentiel}) follows
by the same argument as in the proof of Theorem \ref{Thm2}. To simplify
notation we assume again $n=1$; the general case is a straightforward
extension. Expanding for small $t$ the exponential in the integrand of
(\ref{KSpropagator}) we have%
\begin{align*}
\overline{K}(x,x^{\prime},t)  &  =\left(  \tfrac{1}{2\pi\hbar}\right)
^{n}\int e^{\tfrac{i}{\hbar}p(x-x^{\prime})}(1-\frac{i}{\hbar}\overline
{H}(x,x^{\prime},p)t)dp+O(t^{2})\\
&  =\delta(x-x^{\prime})-\frac{it}{\hbar}\int e^{\tfrac{i}{\hbar}%
p(x-x^{\prime})}\overline{H}(x,x^{\prime},p)dp+O(t^{2})
\end{align*}
and hence
\[
\overline{\psi}(x,t)=\psi_{0}(x)-\frac{it}{\hbar}\int e^{\tfrac{i}{\hbar
}p(x-x^{\prime})}\overline{H}(x,x^{\prime},p)dp+O(t^{2}).
\]
We have
\[
\int e^{\tfrac{i}{\hbar}p(x-x^{\prime})}\overline{H}(x,x^{\prime}%
,p)d^{n}p=\int e^{\tfrac{i}{\hbar}p(x-x^{\prime})}\left(  \frac{p^{2}}%
{2m}+\overline{V}(x,x^{\prime})\right)  dp;
\]
Using the Fourier formula (\ref{fou2}) we get
\[
\left(  \tfrac{1}{2\pi\hbar}\right)  ^{n}\int e^{\tfrac{i}{\hbar}%
p(x-x^{\prime})}\frac{p^{2}}{2m}dp=-\frac{\hbar^{2}}{2m}\frac{\partial^{2}%
}{\partial x^{2}}\delta(x-x^{\prime})
\]
and, noting that $\overline{V}(x,x)=V(x)$,
\begin{align*}
\left(  \tfrac{1}{2\pi\hbar}\right)  ^{n}\int e^{\tfrac{i}{\hbar}%
p(x-x^{\prime})}\overline{V}(x,x^{\prime})dp  &  =\overline{V}(x,x^{\prime
})\delta(x-x^{\prime})\\
&  =V(x)\delta(x-x^{\prime}).
\end{align*}
Summarizing,%
\begin{equation}
\overline{K}(x,x^{\prime},t)=\delta(x-x^{\prime})+\frac{it}{\hbar}\left(
-\frac{\hbar^{2}}{2m}\frac{\partial^{2}}{\partial x^{2}}+V(x)\right)
\delta(x-x^{\prime})+O(t^{2}) \label{estkbar}%
\end{equation}
and hence%
\[
\overline{\psi}(x,t)=\psi_{0}(x)-\frac{it}{\hbar}\left(  -\frac{\hbar^{2}}%
{2m}\frac{\partial^{2}}{\partial x^{2}}+V(x)\right)  +O(t^{2}).
\]
Comparing this expression with (\ref{icp1}) yields (\ref{essentiel}).
\end{proof}

\subsection{Comparison of short-time propagators}

We have seen above that both the Van Vleck and the Kerner--Sutcliffe
propagators are accurate to order $O(t^{2})$:%
\begin{align}
K(x,x^{\prime},t)-\widetilde{K}(x,x^{\prime},t)  &  =O(t^{2}).\\
K(x,x^{\prime},t)-\overline{K}(x,x^{\prime},t)  &  =O(t^{2})
\end{align}
and hence, of course,
\begin{equation}
\widetilde{K}(x,x^{\prime},t)-\overline{K}(x,x^{\prime},t)=O(t^{2}).
\end{equation}
Let us now study the case of the most commonly approximations to the action
used in the theory of the Feynman integral, namely the mid-point rules%
\begin{equation}
S_{1}(x,x^{\prime},t,t^{\prime})=\sum_{j=1}^{n}m_{j}\frac{(x_{j}-x_{j}%
^{\prime})^{2}}{2t}-\frac{1}{2}(V(x)+V(x^{\prime}))t \label{s1}%
\end{equation}
and
\begin{equation}
S_{2}(x,x^{\prime},t)=\sum_{j=1}^{n}m_{j}\frac{(x_{j}-x_{j}^{\prime})^{2}}%
{2t}-V(\tfrac{1}{2}(x+x^{\prime}))\Delta t. \label{s2}%
\end{equation}
We begin with a simple example, that of the harmonic oscillator
\[
H(x,p)=\frac{p^{2}}{2m}+\frac{1}{2}m^{2}\omega^{2}x^{2}%
\]
(we are assuming $n=1$). The exact value of the action is given by the
generating function%
\begin{equation}
S(x,x^{\prime},t)=\frac{m}{2\sin\omega t}((x^{2}+x^{\prime2})\cos\omega
t-2xx^{\prime}); \label{generating}%
\end{equation}
expanding the terms $\sin\omega t$ and $\cos\omega t$ in Taylor series for
$t\rightarrow0$ yields the approximation
\begin{equation}
S(x,x^{\prime},t)=m\frac{(x-x^{\prime})^{2}}{2t}-\frac{m\omega^{2}}{6}%
(x^{2}+xx^{\prime}+x^{\prime2})t+O(t^{2}). \label{wharmo1}%
\end{equation}
It is easy to verify, averaging $\frac{1}{2}m^{2}\omega^{2}x^{2}$ over
$[x^{\prime},x]$ that
\[
\overline{S}(x,x^{\prime},t)=m\frac{(x-x^{\prime})^{2}}{2t}-\frac{m\omega^{2}%
}{6}(x^{2}+xx^{\prime}+x^{\prime2})t
\]
is precisely the approximate action provided by (\ref{sbar}). If we now
instead apply the midpoint rule\ (\ref{s1}) we get%
\[
S_{1}(x,x^{\prime},t)=m\frac{(x-x^{\prime})^{2}}{2t}-\frac{m^{2}\omega^{2}}%
{4}(x^{2}+x^{\prime2})t
\]
which differs from the correct value (\ref{wharmo1}) by a term $O(\Delta t)$.
Similarly, the rule\ (\ref{s2}) yields
\[
S_{2}(x,x^{\prime},t)=m\frac{(x-x^{\prime})^{2}}{2t}-\frac{m^{2}\omega^{2}}%
{8}(x+x^{\prime})^{2}t
\]
which again differs from the correct value (\ref{generating}) by a term
$O(t)$. It is easy to understand why it is so by examining the case of a
general potential function, and to compare $\overline{V}(x,x^{\prime})$,
$\frac{1}{2}(V(x)+V(x^{\prime}))$, and $V(\tfrac{1}{2}(x+x^{\prime})$.
Consider for instance $\overline{V}(x,x^{\prime})-V(\tfrac{1}{2}(x+x^{\prime
})$. Expanding $V(x)$ in a Taylor series at $\overline{x}=\frac{1}%
{2}(x+x^{\prime})$ we get after some easy calculations%
\begin{align*}
\overline{V}(x,x^{\prime})  &  =V(\overline{x})+V^{\prime}(\overline
{x})(x-x^{\prime})+\frac{1}{2}V^{\prime\prime}(\overline{x})(x-x^{\prime}%
)^{2}+O((x-x^{\prime})^{3})\\
&  =V(\tfrac{1}{2}(x+x^{\prime})-\tfrac{1}{12}V^{\prime\prime}(\tfrac{1}%
{2}(x+x^{\prime}))(x-x^{\prime})^{3}+O((x-x^{\prime})^{3})
\end{align*}
hence $\overline{V}(x,x^{\prime})-V(\tfrac{1}{2}(x+x^{\prime})$ is different
from zero unless $x=x^{\prime}$ (or if $V(x)$ is linear) and hence the
difference between $\overline{S}(x,x^{\prime},t)$ and $S_{2}(x,x^{\prime},t)$
will always generate a term containing $t$ so that $\overline{S}(x,x^{\prime
},t)-S_{2}(x,x^{\prime},t)=O(t)$ (and not $O(t^{2})$). A similar calculation
shows that we will also always have $\overline{S}(x,x^{\prime},t)-S_{1}%
(x,x^{\prime},t)=O(t)$. Denoting by $K_{1}(x,x^{\prime},t)$ and $K_{2}%
(x,x^{\prime},t)$ the approximate propagators obtained from the midpoint rules
(\ref{s1}) and (\ref{s2}), respectively, one checks without difficulty that we
will have%
\begin{align*}
\overline{K}(x,x^{\prime},t)-K_{1}(x,x^{\prime},t)  &  =O(t)\\
\overline{K}(x,x^{\prime},t)-K_{2}(x,x^{\prime},t)  &  =O(t)
\end{align*}
where $\overline{K}(x,x^{\prime},t)$ is the Kerner--Sutcliffe propagator
(\ref{KSpropagator}) (in these relations we can of course replace
$\overline{K}(x,x^{\prime},t)$ with the van Vleck propagator $\widetilde
{K}(x,x^{\prime},t)$ since both differ by a quantity $O(t^{2})$ in view of
Theorem \ref{Thm1bis}.

\section{The Case of Arbitrary Hamiltonians\label{sec2}}

\subsection{The main result}

We now consider the following very general situation: we assume that we are in
the presence of a quantum system represented by a state $|\psi\rangle$ whose
evolution is governed by a strongly continuous one-parameter group $(U_{t})$
of unitary operators acting on $L^{2}(\mathbb{R}^{n})$; the operator $U_{t}$
takes an initial wavefunction $\psi_{0}$ to $\psi=U_{t}\psi_{0}$. It follows
from Schwartz's kernel theorem \cite{69} that there exists a function
$K=K(x,x^{\prime};t)$ such that\footnote{This equality is sometimes
postulated; it is in fact a \textit{mathematical} fact which is true in quite
general situations.}
\begin{equation}
\psi(x,t)=\int K(x,x^{\prime};t)\psi_{0}(x^{\prime})d^{n}x^{\prime}
\label{psik}%
\end{equation}
and from Stone's \cite{Stone} theorem one strongly continuous one-parameter
groups of unitary operators that there exists a self-adjoint (generally
unbounded) operator $\widehat{H}$ on $L^{2}(\mathbb{R}^{n})$ such that
\begin{equation}
\psi(x,t)=e^{-\frac{i}{\hbar}\widehat{H}t}\psi_{0}(x); \label{Stone}%
\end{equation}
equivalently $\psi(x,t)$ satisfies the abstract Schr\"{o}dinger equation
(Jauch \cite{Jauch})%
\begin{equation}
i\hbar\frac{\partial\psi}{\partial t}(x,t)=\widehat{H}\psi(x,t).
\label{abschr}%
\end{equation}

We now make the following crucial assumption, which extrapolates to the
general case what we have done for Hamiltonians of the type classical type
\textquotedblleft kinetic energy plus potential\textquotedblright: the quantum
dynamics is again given by the Kerner--Sutcliffe propagator
(\ref{KSpropagator}) for small times $t$,\textit{ i.e.}%
\begin{equation}
K(x,x^{\prime},t)=\overline{K}(x,x^{\prime},t)+O(t^{2}) \label{approx}%
\end{equation}
the approximate propagator being given by
\begin{equation}
\overline{K}(x,x^{\prime},t)=\left(  \tfrac{1}{2\pi\hbar}\right)  ^{n}\int
e^{\tfrac{i}{\hbar}(p(x-x^{\prime})-\overline{H}(x,x^{\prime})t)}d^{n}p
\label{key}%
\end{equation}
where $\overline{H}$ is this time the averaged Hamiltonian function
\begin{equation}
\overline{H}(x,x^{\prime},p)=\int_{0}^{1}H(\tau x+(1-\tau)x^{\prime},p)d\tau.
\label{bjsymb}%
\end{equation}
Obviously, when $H=H_{\mathrm{free}}+V$ the function $\overline{H}$ reduces to
the function $H_{\mathrm{free}}+\overline{V}$ considered in Section \ref{sec1}.

This assumption can be motivated as follows (see de Gosson \cite{SPRINGER},
Proposition 15, \S 4.4). Let%
\[
S(x,x^{\prime},t)=\int_{\gamma}pdx-Hdt
\]
be Hamilton's two-point function calculated along the phase space path leading
from an initial point $(x^{\prime},p^{\prime},0)$ to a final point $(x,p,t)$
(the existence of such a function for small $t$ is guaranteed by
Hamilton--Jacobi theory; see \textit{e.g.} Arnol'd \cite{Arnold} or Goldstein
\cite{Gold}). That function satisfies the Hamilton--Jacobi equation%
\[
\frac{\partial S}{\partial t}+H(x,\nabla_{x}S)=0.
\]
One then shows that the function
\[
\overline{S}(x,x^{\prime},t)=p(x-x^{\prime})-\overline{H}(x,x^{\prime},p)t
\]
where $p$ is the momentum at time $t$ is an approximation to $S(x,x^{\prime
},t)$, in fact%
\[
\overline{S}(x,x^{\prime},t)-S(x,x^{\prime},t)=O(t^{2}).
\]
Here is an example: choose $H=\frac{1}{2}p^{2}x^{2}$ (we are assuming here
$n=1$); then%
\[
S(x,x^{\prime},t)=\frac{(\ln(x/x^{\prime}))^{2}}{2t}.
\]
Using the formula%
\[
\overline{H}(x,x^{\prime},p)=\frac{1}{6}p^{2}(x^{2}+xx^{\prime}+x^{\prime2})
\]
one shows after some calculations involving the Hamiltonian equations for $H$
that%
\[
\overline{S}(x,x^{\prime},t)=\frac{(\ln(x/x^{\prime}))^{2}}{2t}+O(t^{2})
\]
(see \cite{SPRINGER}, Chapter 4, Examples 10 and 16 for detailed calculations).

We are now going to show that the operator $\widehat{H}$ can be explicitly and
uniquely determined from the knowledge of $\overline{K}(x,x^{\prime},t)$.

\begin{theorem}
\label{Thm3}If we assume that the short-time propagator is given by formula
(\ref{key}) then the operator $\widehat{H}$ appearing in the abstract
Schr\"{o}dinger equation (\ref{abschr}) is given by%
\begin{equation}
\widehat{H}\psi(x)=\left(  \tfrac{1}{2\pi\hbar}\right)  ^{n}\int e^{\frac
{i}{\hbar}p(x-x^{\prime})}\overline{H}(x,x^{\prime},p)\psi(x^{\prime}%
)d^{n}pd^{n}x^{\prime}. \label{bjop}%
\end{equation}

\end{theorem}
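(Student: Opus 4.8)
The plan is to extract $\widehat{H}$ from the short-time behaviour of the evolution, exactly as in the proofs of Theorems \ref{Thm2} and \ref{Thm1bis}, but now keeping $\overline{H}$ general. First I would invoke Stone's theorem in the form already recalled above: since $\psi(x,t)=e^{-i\widehat{H}t/\hbar}\psi_{0}(x)$, for $\psi_{0}$ in the domain of $\widehat{H}$ (in particular for $\psi_{0}\in\mathcal{S}(\mathbb{R}^{n})$, which is a core) we have
\[
\widehat{H}\psi_{0}(x)=i\hbar\left.\frac{\partial\psi}{\partial t}(x,t)\right|_{t=0}=i\hbar\lim_{t\rightarrow0}\frac{\psi(x,t)-\psi_{0}(x)}{t}.
\]
So it suffices to compute this limit under the standing assumption (\ref{approx})--(\ref{bjsymb}). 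By (\ref{approx}) we may replace $\psi(x,t)$ by $\overline{\psi}(x,t)=\int\overline{K}(x,x^{\prime},t)\psi_{0}(x^{\prime})d^{n}x^{\prime}$ at the cost of an $O(t^{2})$ error, which disappears after dividing by $t$ and letting $t\rightarrow0$.

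Next I would expand $\overline{K}$ for small $t$ precisely as in the proof of Theorem \ref{Thm1bis}, but without specialising $\overline{H}$. Expanding the exponential in (\ref{key}) to first order in $t$ and using that $\left(\tfrac{1}{2\pi\hbar}\right)^{n}\int e^{\frac{i}{\hbar}p(x-x^{\prime})}d^{n}p=\delta(x-x^{\prime})$ gives, in the distributional sense,
\[
\overline{K}(x,x^{\prime},t)=\delta(x-x^{\prime})-\frac{it}{\hbar}\left(\tfrac{1}{2\pi\hbar}\right)^{n}\int e^{\frac{i}{\hbar}p(x-x^{\prime})}\overline{H}(x,x^{\prime},p)d^{n}p+O(t^{2}).
\]
Integrating against $\psi_{0}\in\mathcal{S}(\mathbb{R}^{n})$ yields
\[
\overline{\psi}(x,t)=\psi_{0}(x)-\frac{it}{\hbar}\left(\tfrac{1}{2\pi\hbar}\right)^{n}\int e^{\frac{i}{\hbar}p(x-x^{\prime})}\overline{H}(x,x^{\prime},p)\psi_{0}(x^{\prime})d^{n}p\,d^{n}x^{\prime}+O(t^{2}),
\]
so that $\tfrac{1}{t}(\overline{\psi}(x,t)-\psi_{0}(x))$ converges as $t\rightarrow0$ to $-\tfrac{i}{\hbar}$ times the double integral. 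Combining with the first paragraph (and the $O(t^2)$ bound from (\ref{approx})) and multiplying by $i\hbar$ gives exactly (\ref{bjop}). Uniqueness of $\widehat{H}$ is automatic: two self-adjoint generators agreeing on the core $\mathcal{S}(\mathbb{R}^{n})$ coincide.

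The main obstacle is making the distributional expansion of $\overline{K}$ and the passage to the limit rigorous: the integral over $p$ in (\ref{key}) is only an oscillatory integral, and for it (and for the $t$-linear term) to define a tempered distribution one needs $\overline{H}(x,x^{\prime},p)$ to be polynomially bounded in $p$ with derivatives controlled uniformly on compacta in $(x,x^{\prime})$ — which holds for the monomial Hamiltonians $x^{m}p^{\ell}$ of interest (and more generally for Shubin-type symbols), via the Fourier formula (\ref{fou2}) term by term. One must also check that the $O(t^{2})$ remainder in the expansion of the exponential is $O(t^{2})$ after pairing with $\psi_{0}$ — i.e. that it is $O(t^{2})$ in $\mathcal{S}^{\prime}$, not merely formally — which follows by estimating the next Taylor term $\tfrac{1}{2}(\tfrac{t}{\hbar})^{2}\overline{H}(x,x^{\prime},p)^{2}$ the same way. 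Once these analytic points are in place, the identification of $\widehat{H}$ is forced, and no freedom remains (this is the contrast with Cohen's objection: fixing the short-time propagator to be the $O(t^{2})$-accurate $\overline{K}$ pins down $\overline{H}$, hence $\widehat{H}$).
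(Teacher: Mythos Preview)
Your proposal is correct and follows essentially the same route as the paper: both extract $\widehat{H}$ from the linear-in-$t$ part of $\overline{K}$, the paper by computing $i\hbar\,\partial\overline{K}/\partial t$ and setting $t=0$, you by Taylor-expanding the exponential to first order and taking the difference quotient --- these are the same calculation. Your added remarks on the oscillatory-integral/distributional justification and on uniqueness via a core are reasonable glosses that the paper leaves implicit.
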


\begin{proof}
Differentiating both sides of the equality (\ref{psik}) with respect to time
we get
\[
i\hbar\frac{\partial\psi}{\partial t}(x,t)=i\hbar\int\frac{\partial
K}{\partial t}(x,x^{\prime},t)\psi_{0}(x^{\prime})d^{n}x^{\prime};
\]
since $K$ itself satisfies the Schr\"{o}dinger equation (\ref{abschr}) we thus
have
\[
\widehat{H}\psi(x,t)=i\hbar\int\frac{\partial K}{\partial t}(x,x^{\prime
},t)\psi_{0}(x^{\prime})d^{n}x^{\prime}.
\]
It follows, using the assumptions (\ref{approx}) and (\ref{key}), that%
\[
\widehat{H}\psi(x,t)=i\hbar\int\frac{\partial\overline{K}}{\partial
t}(x,x^{\prime},t)\psi_{0}(x^{\prime})d^{n}x^{\prime}+O(t)
\]
and hence, letting $t\rightarrow0$,%
\begin{equation}
\widehat{H}\psi_{0}(x)=i\hbar\int\frac{\partial\overline{K}}{\partial
t}(x,x^{\prime},0)\psi_{0}(x^{\prime})d^{n}x^{\prime}. \label{clef}%
\end{equation}
Introducing the notation%
\[
\overline{S}(x,x^{\prime},t)=p(x-x^{\prime})-\overline{H}(x,x^{\prime},p)t
\]
we have
\begin{align*}
\frac{\partial\overline{K}}{\partial t}(x,x^{\prime},t)  &  =\left(  \tfrac
{1}{2\pi\hbar}\right)  ^{n}\tfrac{i}{\hbar}\int e^{\frac{i}{\hbar}\overline
{S}(x,x^{\prime},t)}\frac{\partial\overline{S}}{\partial t}(x,x^{\prime
},t)d^{n}p^{\prime}\\
&  =\left(  \tfrac{1}{2\pi\hbar}\right)  ^{n}\tfrac{1}{i\hbar}\int e^{\frac
{i}{\hbar}\overline{S}(x,x^{\prime},t)}\overline{H}(x,x^{\prime},p^{\prime
})d^{n}p^{\prime}.
\end{align*}
Taking the limit $t\rightarrow0$ and multiplying both sides of this equality
by $i\hbar$ we finally get%
\[
\widehat{H}\psi_{0}(x)=\left(  \tfrac{1}{2\pi\hbar}\right)  ^{n}\int
e^{\frac{i}{\hbar}p(x-x^{\prime})}\overline{H}(x,x^{\prime},p^{\prime
},t^{\prime})\psi_{0}(x^{\prime})d^{n}p^{\prime}d^{n}x^{\prime}%
\]
which proves (\ref{bjop}).
\end{proof}

We will call the operator $\widehat{H}$ defined by (\ref{bjop}) the
\emph{Born--Jordan quantization} of the Hamiltonian function $H$. That this
terminology is justified is motivated below.

\subsection{The case of monomials}

Let us show that (\ref{bjop}) reduces to the usual Born--Jordan quantization
rule (\ref{bj1}) when $H=x^{m}p^{\ell}$ (we are thus assuming dimension
$n=1$). We have here
\[
H(\tau x+(1-\tau)x^{\prime},p)=(\tau x+(1-\tau)x^{\prime})^{m}p^{\ell}%
\]
hence, using the binomial formula,%
\begin{equation}
H(\tau x+(1-\tau)x^{\prime},p)=\sum_{k=0}^{m}\binom{m}{k}\tau^{k}%
(1-\tau)^{m-k}x^{k}p^{\ell}x^{\prime m-k}. \label{amn}%
\end{equation}
Integrating from $0$ to $1$ in $\tau$ and noting that
\[
\int_{0}^{1}\tau^{k}(1-\tau)^{m-k}d\tau=\frac{k!(m-k)!}{(m+1)!}%
\]
we get%
\[
\overline{H}(x,x^{\prime},p)=\frac{1}{m+1}\sum_{k=0}^{m}x^{k}p^{\ell}x^{\prime
m-k}%
\]
and hence, using the definition (\ref{bjop}) of $\widehat{H}$,
\begin{align*}
\widehat{H}\psi(x)  &  =\frac{1}{2\pi\hbar(m+1)}\sum_{k=0}^{m}\int_{-\infty
}^{\infty}e^{\frac{i}{\hbar}p(x-x^{\prime})}x^{k}p^{\ell}x^{\prime m-k}%
\psi(x^{\prime})dpdx^{\prime}\\
&  =\frac{x^{k}}{2\pi\hbar(m+1)}\sum_{k=0}^{m}\int_{-\infty}^{\infty}\left(
\int_{-\infty}^{\infty}e^{\frac{i}{\hbar}p(x-x^{\prime})}p^{\ell}dp\right)
x^{\prime m-k}\psi(x^{\prime})dx^{\prime}.
\end{align*}
In view of the Fourier inversion formula (\ref{fou2}) we have%
\begin{equation}
\frac{1}{2\pi\hbar}\int_{-\infty}^{\infty}e^{\frac{i}{\hbar}p(x-x^{\prime}%
)}p^{\ell}dp=(-i\hbar)^{\ell}\delta^{(\ell)}(x-x^{\prime})
\end{equation}
so that we finally get%
\[
\widehat{H}\psi(x)=\frac{1}{m+1}\sum_{k=0}^{m}x^{k}(-i\hbar)^{\ell}%
\frac{\partial^{\ell}}{\partial x^{\ell}}(x^{m-k}\psi),
\]
which is equivalent to (\ref{bj1}) since $\widehat{p}^{\ell}=(-i\hbar)^{\ell
}\partial^{\ell}/\partial x^{\ell}$.

\subsection{Physical Hamiltonians\label{sec3}}

Let us now show that the Born--Jordan quantization of a physical Hamiltonian
of the type%
\begin{equation}
H=\sum_{j=1}^{n}\frac{1}{2m_{j}}(p_{j}-A_{j}(x))^{2}+V(x) \label{hyp}%
\end{equation}
coincide with the usual operator
\begin{equation}
\widehat{H}=\sum_{j=1}^{n}\frac{1}{2m_{j}}\left(  -i\hbar\frac{\partial
}{\partial x_{j}}-A_{j}(x)\right)  ^{2}+V(x) \label{claim}%
\end{equation}
obtained by Weyl quantization (the functions $A_{j}$ and $V$ are assumed to be
$C^{1}$). Since the quantizations of $p_{j}^{2}$, $A_{j}(x)$ and $V(x)$ are
the same in all quantization schemes (they are respectively $-\hbar
^{2}\partial^{2}/\partial x_{j}^{2}$ and multiplication by $A_{j}(x)$ and
$V(x)$), we only need to bother about the cross-products $p_{j}A(x)$. We claim that%

\begin{equation}
\widehat{p_{j}A}\psi=-\frac{i\hbar}{2}\left[  \frac{\partial}{\partial x_{j}%
}(A\mathcal{\psi)}+A\frac{\partial\psi}{\partial x_{j}}\right]  , \label{pwbj}%
\end{equation}
from which (\ref{claim}) immediately follows. Let us prove (\ref{pwbj}); it is
sufficient to do this in the case $n=1$. Denoting by $\overline{pA}$ the
Born--Jordan quantization of the function $pA$ we have
\[
\overline{pA}(x,x^{\prime},p)=p\int_{0}^{1}A(\tau x+(1-\tau)x^{\prime}%
)d\tau=p\overline{A}(x,x^{\prime})
\]
and hence
\begin{align*}
\widehat{pA}\psi(x)  &  =\frac{1}{2\pi\hbar}\int e^{\frac{i}{\hbar
}p(x-x^{\prime})}p\overline{A}(x,x^{\prime})\psi(x^{\prime})dx^{\prime}dp\\
&  =\int_{-\infty}^{\infty}\left(  \frac{1}{2\pi\hbar}\int_{-\infty}^{\infty
}e^{\frac{i}{\hbar}p(x-x^{\prime})}pdp\right)  \overline{A}(x,x^{\prime}%
)\psi(x^{\prime})dx^{\prime}.
\end{align*}
In view of (\ref{fou2}) the expression between the square brackets is
$-i\hbar\delta^{\prime}(x-x^{\prime})$ so that%
\begin{align*}
\widehat{pA}\psi(x)  &  =-i\hbar\int_{-\infty}^{\infty}\delta^{\prime
}(x-x^{\prime})\overline{A}(x,x^{\prime})\psi(x^{\prime})dx^{\prime}\\
&  =-i\hbar\int_{-\infty}^{\infty}\delta(x-x^{\prime})\frac{\partial}{\partial
x^{\prime}}(\overline{A}(x,x^{\prime})\psi(x^{\prime}))dx^{\prime}\\
&  =-i\hbar\left(  \frac{\partial\overline{A}}{\partial x^{\prime}}%
(x,x)\psi(x))+\overline{A}(x,x)\frac{\partial\psi}{\partial x^{\prime}%
}(x))\right)
\end{align*}
Now, by definition of $\overline{A}(x,x^{\prime})$ we have $\overline
{A}(x,x)=A(x)$ and%
\[
\frac{\partial\overline{A}}{\partial x^{\prime}}(x,x)=\int_{0}^{1}%
(1-\tau)\frac{\partial A}{\partial x}(x)d\tau=\frac{1}{2}\frac{\partial
A}{\partial x}(x)
\]
and hence%
\[
\widehat{pA}\psi=-\frac{i\hbar}{2}\frac{\partial A}{\partial x}\psi-i\hbar
A\frac{\partial\psi}{\partial x}%
\]
which is the same thing as (\ref{pwbj}).

\section{Discussion}

Both Kerner and Sutcliffe, and Cohen relied on path integral arguments which
were doomed to fail because of the multiple possible choices of histories in
path integration. However, it follows from our rigorous constructions that
Kerner and Sutcliffe's insight was right, even though it was not
mathematically justified. While there is, as pointed out by Cohen
\cite{Cohenwrong}, a great latitude in choosing the short-time propagator,
thus leading to different quantizations, our argument did not make use of any
path-integral argument; what we did was to propose a short-time propagator
which is \emph{exact} up to order $O(t^{2})$ (as opposed to those obtained by
using midpoint rules), and to show that if one use this propagator, then one
must quantize Hamiltonian functions (and in particular monomials) following
the prescription proposed by Born and Jordan in the case of monomials.

\begin{acknowledgement}
This work has been financed by the Austrian Science Fund (FWF). Grant number:
P 27773--N25. The author thanks three Reviewers for very useful comments, and
for having pointed out several typos.
\end{acknowledgement}


\begin{thebibliography}{99}                                                                                               %


\bibitem {albeverio}S. Albeverio, G. Guatteri, and S. Mazzucchi, Phase space
Feynman path integrals, J. Math. Phys. 43 (2002) 2847--2857.

\bibitem {Arnold}V. I. Arnold, Mathematical Methods of Classical Mechanics,
Graduate Texts in Mathematics, 2nd edition, Springer-Verlag, 1989.

\bibitem {bj}M. Born and P. Jordan, Zur Quantenmechanik, Z. Physik 34 (1925) 858--888.

\bibitem {bjh}M. Born, W. Heisenberg, and P. Jordan, Zur Quantenmechanik II,
Z. Physik 35 (1925) 557--615.

\bibitem {Cohenwrong}L. Cohen, Hamiltonian Operators via Feynman Path
Integrals, Journal of Mathematical Physics 11(11) (1970) 3296--3297.

\bibitem {dasp}J. P. Dahl, M. Springborg, Wigner's phase space function and
atomic structure: I. The hydrogen atom ground state. Molecular Physics 47(5)
(1982) 1001--1019.

\bibitem {Dewey}T. G. Dewey, Numerical mathematics of Feynman path integrals
and the operator ordering problem, Phys. Rev. A 42(1) (1990) 32--37.

\bibitem {Dowker}J. S. Dowker, Path integrals and ordering rules. J. Math.
Phys. 17(10) (1976) 1873--1874.

\bibitem {Garrod}C. Garrod, Hamiltonian Path-Integral Methods, Rev. Mod. Phys.
38(3) (1966) 483--494.

\bibitem {Gold}H. Goldstein, Classical Mechanics, Addison--Wesley, (1950), 2nd
edition, (1980), 3d edition, (2002).

\bibitem {ICP}M. de Gosson, The Principles of Newtonian and quantum mechanics.
The need for Planck's constant $\hbar$. With a foreword by Basil Hiley.
Imperial College Press, London, 2001.

\bibitem {Birkbis}M. de Gosson, Symplectic Methods in Harmonic Analysis and in
Mathematical Physics, Birkh\"{a}user, 2011.

\bibitem {58}M. de Gosson, Symplectic Covariance properties for Shubin and
Born-Jordan pseudo-differential operators. Trans. Amer. Math. Soc., 365(6),
3287--3307 (2013).

\bibitem {FPBJ}M. de Gosson, Born--Jordan Quantization and the Equivalence of
the Schr\"{o}dinger and Heisenberg Pictures. Found. Phys. 44(10), 1096--1106 (2014).

\bibitem {dilemma}M. de Gosson, The Angular Momentum Dilemma and Born--Jordan
Quantization, Found. Phys. 47(1), 61--70 (2017).

\bibitem {SPRINGER}M. de Gosson, Born-Jordan Quantization: Theory and
Applications., Springer 2016

\bibitem {gohi2}M. de Gosson and B. J. Hiley. Short-time quantum propagator
and Bohmian trajectories. Physics Letters A 377(42) (2013) 3005--3008.

\bibitem {gohi3}M. de Gosson and B. J. Hiley. Hamiltonian flows, short-time
quantum propagators and the quantum Zeno effect. EmQM13:\ Emergent Quantum
Mechanics 2013, Journal of Physics: Conference Series 504 (2013) 012027.

\bibitem {ICPnew}M. de Gosson and B. J. Hiley, Mathematical and Physical
Aspects of Quantum Processes, Imperial College Press, London, 2015.

\bibitem {golu1}M. de Gosson and F. Luef, Preferred Quantization Rules:
Born--Jordan vs. Weyl; Applications to Phase Space Quantization. J.
Pseudo-Differ. Oper. Appl. 2(1) (2011) 115--139.

\bibitem {Gutz}M. C. Gutzwiller, Chaos in classical and quantum mechanics.
Springer Science \& Business Media, 2013.

\bibitem {Hall}M. Hall, Weyl's rule and Wigner equivalents for phase space
monomials, J. Phys.A: Math. Gen 18 (1985) 29--36.

\bibitem {hei}W. Heisenberg, \"{U}ber quantentheoretische Umdeutung
kinematischer und mechanischer Beziehungen, Z. Physik 33 (1925) 879--893.

\bibitem {69}L. H\"{o}rmander, The analysis of linear partial differential
operators I, Grundl. Math. Wissenschaft. 256, Springer, 1985.

\bibitem {Jauch}J. M. Jauch, Foundations of quantum mechanics, Addison-Wesley
Series in Advanced Physics, 1968.

\bibitem {kau}S. K. Kauffmann, Unique Closed-Form Quantization Via Generalized
Path Integrals or by Natural Extension of the Standard Canonical Recipe,
arXiv:hep-th/9505189v2 (1995)

\bibitem {Kauffmann}S. K. Kauffmann, Unambiguous Quantization from the Maximum
Classical Correspondence that Is Self-consistent: The Slightly Stronger
Canonical Commutation Rule Dirac Missed, Found. Phys. 41(5) (2011) 805--819.

\bibitem {Kerner}E. H. Kerner and W. G. Sutcliffe, Unique Hamiltonian
Operators via Feynman Path Integrals, Journal of Mathematical Physics 11(2)
(1970) 391--393.

\bibitem {khan}D. C. Khandekar and S. V. Lawande, Feynman Path Integrals: Some
Exact Results and Applications, Phys. Reps. 137, Nos. 2 \& 3 (1986) 115--229.

\bibitem {kuma}N. Kumano-go and D. Fujiwara, Phase space Feynman path
integrals via piecewise bicharacteristic paths and their semiclassical
approximations, Bull. Sci. math. 132 (2008) 313--357.

\bibitem {Little}R. G. Littlejohn, The Van Vleck Formula, Maslov Theory, and
Phase Space Geometry, Journal of Statistical Physics 68(1/2) (1992) 7--50

\bibitem {mccoy}N. H. McCoy, On the function in quantum mechanics which
corresponds to a given function in classical mechanics, Proc. Natl. Acad. Sci.
U.S.A. 18(11) (1932) 674--676.

\bibitem {mado1}I. W. Mayes and J. S. Dowker, Canonical functional integrals
in general coordinates, Proc. R. Soc. Lond. A 327 (1972) 131--135.

\bibitem {mado2}I. W. Mayes and J. S. Dowker, Hamiltonian orderings and
functional integrals, Journal of Mathematical Physics 14 (1973) 434--439.

\bibitem {makmil1}N. Makri and W. H. Miller, Correct short time propagator for
Feynman path integration by power series expansion in $\Delta t$, Chem. Phys.
Lett. 151 1--8, 1988.

\bibitem {makmil2}N. Makri and W. H. Miller, Exponential power series
expansion for the quantum time evolution operator. J. Chem. Phys. 90,
904--911, 1989.

\bibitem {MF}V. P. Maslov and M. V. Fedoriuk. Semi-classical approximation in
quantum mechanics. Vol. 7. Springer Science \& Business Media, 2001.

\bibitem {Nelson}E. Nelson, Feynman integrals and the Schr\"{o}dinger
equation, J. Math. Phys. 5, 332--343 (1964).

\bibitem {Park}D. Park, Introduction to the Quantum Theory, McGraw-Hill Inc., 1992.

\bibitem {schulman}L. S. Schulman, Techniques and applications of path
integration, Wiley, New York, 1981.

\bibitem {Shewell}J. R. Shewell, On the Formation of Quantum-Mechanical
Operators. Am. J. Phys. 27 (1959) 16--21.

\bibitem {Stone}M. H. Stone, Linear Transformations in Hilbert Space. III.
Operational Methods and Group Theory, PNAS 16 (2) (1930) 172--175.
\end{thebibliography}
\end{document}